\documentclass{llncs}
\usepackage[utf8]{inputenc}
\usepackage{amssymb, amsmath, tikz}
\newcommand{\B}{\{0,1\}}

\newcommand{\SC}{\mathcal{S}}
\newcommand{\SCC}{\mathcal{O}}
\newcommand{\FP}{\operatorname{FP}(f)}
\newcommand{\powerset}[1]{2^{#1}}
\newcommand{\card}[1]{|{#1}|}
\newcommand{\gp}[1]{G[P_#1 ]}
\usepackage{hyperref}
\tikzset{main node/.style={circle,fill=white,draw,minimum size=0.5cm,inner sep=0pt}, outer sep=0.5mm		}

\begin{document}

\title{Relationship between the Reprogramming Determinants of Boolean Networks and their Interaction Graph}
\author{Hugues Mandon\inst{1}, Stefan Haar\inst{2}, Lo\"ic Paulev\'e\inst{1}}
\institute{%
LRI UMR 8623, Univ. Paris-Sud – CNRS, Universit\'e Paris-Saclay, France
\and
LSV, ENS Cachan, INRIA, CNRS, Universit\'e Paris-Saclay, France
}

\maketitle

\begin{abstract}
In this paper, we address the formal characterization of targets triggering cellular trans-differentiation
in the scope of Boolean networks with asynchronous dynamics.
Given two fixed points of a Boolean network, we are interested in all the combinations of mutations
which allow to switch from one fixed point to the other, either possibly, or inevitably.
In the case of existential reachability, we prove that the set of nodes to (permanently) flip are only and 
necessarily in certain connected components of the interaction graph.
In the case of inevitable reachability, we provide an algorithm to identify a
subset of possible solutions.
\end{abstract}

\section{Introduction}

In the field of regenerative medicine, an emerging way to treat patients is to
reprogram cells, leading, for instance, to tissue or neuron regeneration. Such a
challenge has become realistic after first experiments have shown that some of
the cell fate decisions can be reversed \cite{Takahashi2016}. Whereas the cells
go through several multipotent states before reaching a differentiated state,
the differentiation process can be inversed,  producing induced pluripotent stem
cells (iPSCs) from an already differentiated cell. By using a distinct
differentiation path, this allows to "transform" the type of a cell.
Alternatively, it is also possible to directly perform a trans-differentiation
without necessarily going (back) through a multipotent state
\cite{Graf2009,delSol14}.

In the aforementioned work, the de- and trans-differentiation has been achieved
by targeting specific genes, that we refer to as \emph{Reprogramming
Determinants} (RDs), through the mediation of their transcription factors
\cite{Takahashi2016,Crespo2013}.

The computational prediction of RDs requires to assess multiple features of the cell dynamics and the reprogramming strategy, such as the impact of the kind of perturbations (persistent versus temporary) and of their order; the nature of targeted cell type (differentiated/pluripotent), and the desired  inevitability of their reachability (fidelity); the nature and duration of the triggered cascade of regulations (efficiency); and finally, the RD robustness with respect to initial state heterogeneity among cell population, and with respect to uncertainties in the computational model.

So far, no general framework allows to efficiently encompass those features to systematically predict best combinations of RDs in distinct cellular reprogramming events.

In this paper, we address the identification of RDs from \emph{Boolean Networks} (BNs) which model the dynamics of gene regulation and signalling networks. The state of the components (or nodes) of the networks are represented by Boolean variables, and the state changes are specified by Boolean functions which associate the next state of nodes, given the (binary) state of their regulators
\cite{Thomas73,Aracena2008}.
BNs are well suited for an automatic reasoning on large biological networks where the available knowledge is mostly about activation and inhibition relations\cite{Abou-Jaoude2015}. Such activation/inhibition relations between components form a signed directed graph, that we refer to as the \emph{Interaction Graph}. 

In this work, we make the assumption that the differentiated cellular states correspond to the \emph{attractors} of the dynamics of the computational model, i.e., the long-run behaviours.
In the scope of BNs, those attractors can be of two kinds: either a single state (referred to as a fixed point), or a terminal cyclic behaviour.

The relationship between the IG of BNs and the number of their attractor has been extensively studied
\cite{Aracena2008,Richard09-MaxFP,Richard10-AAM}.
However, little work exists on the characterization of the perturbations which trigger a change of attractor.
Currently, most of RDs prediction are performed using statistical analysis on
expression data in order to rank candidate transcription factors
\cite{Chang2011,Rackham2016,Jo2016}.
Whereas based on network models, those approaches do not allow to derive a
complete set of solution for the reprogramming problem.
In \cite{Crespo2013}, the authors developed a heuristic to derive candidate RDs
from a pure topological analysis of the interaction graph:
the RDs are selected only in positive cycles that have different values in the
started and target fixed points.
However, there is no guarantee that the derived RDs can actually lead to a
change of attractor in the asynchronous dynamics of the Boolean networks, and
neither that the target fixed point is the only one reachable.
Finally, \cite{Gao2016} gives a formal characterization of RDs subject to
temporal mutations which trigger a change of attractor in the synchronous
semantics of conjunctive Boolean networks.

\paragraph{Contribution}
This work relies on model checking and reachability analysis, that have been proved useful and successful in previous studies\cite{Abou-Jaoude2015,P16-CMSB}.

Given a BN, all of whose attractors are fixed points, given an initial fixed point and a target fixed point, we provide a characterization of the candidate RDs (set of nodes) with respect to the interaction graph and for two settings of cellular reprogramming:
\begin{itemize}
\item 
with a permanent perturbation of RDs, the target fixed point becomes reachable in the asynchronous dynamics of the BN;
\item
with a permanent perturbation of RDs, the target fixed point is the sole reachable attractor in the asynchronous dynamics of the BN.
\end{itemize}
For the first case, we prove that all the RDs are distributed among particular strongly connected components of the interaction graph, and we give algorithms to determine them in both settings.
In the second case, we prove that only some of them are distributed among
strongly connected components of the interaction graph.
We provide an algorithm to identify possible combination of permanent
perturbations leading to inevitable reachability of the target fixed point.
Whereas the algorithm may miss some solutions, all returned solutions are
correct.

\paragraph{Outline}
Section ~\ref{sec:background} gives the definitions and basic properties of BNs and of their asynchronous dynamics. The formalization of the BN reprogramming problem with permanent perturbations of nodes is established in Sect.~\ref{sec:bn-reprogramming}. Section ~\ref{sec:rd-scc} states the main results on the characterization of RDs with respect to the interaction graph of BNs. An algorithm to enumerate all RDs by exploiting this characterization is given in Sect.~\ref{sec:enumeration}. Finally, \ref{sec:discussion} discusses the results and sketches future work.

\subsection*{Notations}

Given a finite set $I$, $\powerset{I}$ is the power set of $I$, $\card{I}$ the
cardinality.
Given a positive integer $n$, $[n] = \{1,\dots,n\}$.

Given a Boolean state $x\in\B^n$ and set of indexes $I\subset [n]$,
$\bar{x}^I$ is the state where $\bar{x_i}^I = x_i$ if $i \notin I$ and $\bar{x_i}^I = 1-x_i$ if $i \in I$.
Similarly, given $x,y\in\B^n$,  $x_{[x_{I} = y_{I}]}$ denotes the state where
for all $i \in I,\, (x_{[x_{I} = y_{I}]})_i = y_i$ and for all $i \notin I,\,
(x_{[x_{I} = y_{I}]})_i = x_i $\\

\section{Background}
\label{sec:background}

In this section, we give the formal definition of Boolean networks, their
interaction graph and transition graph in the asynchronous semantics.
Finally, we recall the main link between their attractors and the positive
cycles in their interaction graph.

\subsection{Definitions}

\paragraph{Boolean Network (BN):}
A BN is a finite set of Boolean variables, each of them having a Boolean function. This function is a logical Boolean function depending from the network's variables and determining the next state of the variable.
\begin{definition}[Boolean Network (BN)]
A Boolean Network is a function $f$ such that:
\begin{align*}
f&: &\{0,1\}^n & \rightarrow \{0,1\}^n \\
& &x=(x_1,...,x_n) & \mapsto f(x) = (f_1(x),...,f_n(x))
\end{align*}
\label{def:bn}
\end{definition}

\begin{example}
An example of BN of dimension $3$ ($n = 3$) is
\begin{align*}
f_1(x) &=x_3 \vee (\neg x_1 \wedge x_2) \\
f_2(x) &= \neg	 x_1 \vee x_2 \\
f_3(x) &= x_3 \vee (x_1 \wedge \neg x_2)
\end{align*}
\label{ex:baseExample}
\end{example}

\paragraph{Interaction Graph:} 
To determine the RDs, we rely on a simplification of the interactions between
the genes, and of the concentrations. A gene will either be active or inhibited.
Gene interactions are simplified likewise, a gene either activates or inhibits
another gene, and we ignore time scales. With this in mind, an \emph{interaction graph} (Def.\ref{def:ig}) can be build: genes are the vertices, and the interactions are the oriented arcs, labelled either $+$ or $-$, if it is an activation or an inhibition.

\begin{definition}[Interaction Graph]
An interaction graph is noted as $G = (V, E)$, with $V$ being the vertex set, and $E$ being the directed, signed edge set, $E \subset (V \times V \times \{-,+\})$
\label{def:ig}
\end{definition}

A cycle between a set of nodes $C\subseteq V$ is said positive (resp. negative) if and only if
there is an even (odd) number of negative edges between those nodes.

An interaction graph can also be defined as an abstraction of a Boolean network: the functions are not given and not always known, but if a vertex $u$ is used in the function $f_v$, there is an edge from $u$ to $v$, negative if $f_v(x)$ contains $\neg x_u$ and positive if it contains $x_u$.

\begin{definition}[Interaction Graph of a Boolean network ($G(f)$)]
An interaction graph can be obtained from the Boolean network $f$:
the vertex set is $[n]$, and for all $u,v \in [n]$ there is a positive (resp. negative) arc from $u$ to $v$ if $f_{vu}(x)$ is positive (resp. negative) for at least one $x \in \{0,1\}^n$ (For every $u,\, v \in \{1, . . . , n\}$, the function $f_{vu}$ is the discrete derivative of $f_v$ considering $u$,  defined on $\{0, 1\}^n$ by : $f_{vu}(x):=f_v(x_1,..,x_{u-1},1,x_{u+1},..,x_n) - f_v(x_1,..,x_{u-1},0,x_{u+1},..,x_n)$).
\end{definition}

Given an interaction graph $G=(V,E)$, and one of its vertex $u \in V$, $P_u$
denotes the set of ancestors of $u$, i.e., the vertices $v$ for which there exists a path in $E$ from $v$ to $u$.
Similarly, $p_u$ is the set of the parents of $u$, i.e., $v \in p_u \Rightarrow (v,u,s) \in E$.
Furthermore, $\gp{u}$ is the induced subgraph of $G$ with $P_u$ as vertex set.

Fig.~\ref{fig:example-ig} gives an example of an interaction graph, which is also
equal to $G(f)$, where $f$ is the Boolean network of Ex.\ref{ex:baseExample}.
\begin{figure}[t]
\begin{center}
\begin{tikzpicture}[node distance=2cm, every loop/.style={}]
\node[main node]		(2)		{2} ;
\node[main node]		(1)	[below left of=2]	{1} ;
\node[main node]		(3)	[below right of=2]	{3} ;
\path	(1)	edge[-|, bend left, red]		node[left] {$-$}	(2)
		(1)	edge[-|, loop left, red, ] 	node {$-$}	(1)
		(1) edge[->, bend right, blue]	node[below] {$+$}	(3)
		(2) edge[->, bend left, blue] 	node[right] {$+$}	(1)
		(2) edge[->, loop above, blue] 	node {$+$}	(2)
		(2) edge[-|, red]				node[right] {$-$}	(3)
		(3) edge[->, blue]				node[above] {$+$}	(1)
		(3)	edge[->, loop right, blue]	node {$+$}	(3) ;
		
\end{tikzpicture}
\end{center}
\caption{Interaction graph of Ex.\ref{ex:baseExample}
A "normal" blue arrow means an activation, and a "flattened" red arrow means an inhibition.
\label{fig:example-ig}
}
\end{figure}
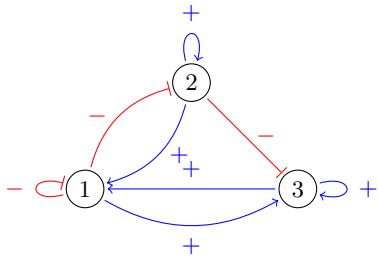

\paragraph{Transition Graph:}
We model the dynamics of a Boolean network $f$ by \emph{transitions} between its
states $x\in\B^n$.
In the scope of this paper, we consider the \emph{asynchronous semantics} of
Boolean networks:
a transition updates the value of only one vertex $u\in[n]$.
From a single $x\in\B^n$, one  has different transitions for each vertex $u$
such that $f_u(x) \neq x_u$.
This leads to the definition of the \emph{transition graph}
(Def.~\ref{def:transition-graph}) where vertices are
all the possible states $\B^n$, and edges correspond to asynchronous
transitions.
\begin{definition}[Transition graph]
\label{def:transition-graph}
The transition graph is the graph having $\{0,1\}^n$ as vertex set and the edges
set $\{x \to \bar{x}^{\{u\}} \mid x\in\B^n, u\in[n], x_u \neq (f(x))_u\}$. An existing path from $x$ to $y$ is noted $x \to^* y$.
\end{definition}

Fig.\ref{fig:ex-tg} gives the transition graph of the asynchronous dynamics of
Boolean network of Ex.\ref{ex:baseExample}.
\begin{figure}
\begin{center}
\begin{tikzpicture}[node distance=1.5cm]
\node[draw]	(000)						{000} ;
\node[draw]	(010)	[above of=000]		{010} ;
\node[draw]	(100)	[right of=000]		{100} ;
\node[draw]	(110)	[right of=010]		{110} ;
\node[draw]	(001)	[above right of=100]	{001} ;
\node[draw]	(011)	[above of=001]		{011} ;
\node[draw]	(101)	[right of=001]		{101} ;
\node[draw]	(111)	[right of=011]		{111} ;

\node[draw, align=left, minimum width=2.5cm, minimum height=0.8cm] at (0.8,1.5) [magenta] {} ;
\node[draw, align=left, minimum width=0.9cm, minimum height=0.7cm] at (4.05,1.05) [magenta] {} ;
\node[draw, align=left, minimum width=0.9cm, minimum height=0.7cm] at (4.05,2.55) [magenta] {} ;

\path	(000)	edge[->]		(010)
		(001)	edge[->]		(011)
		(001)	edge[->]		(101)
		(010)	edge[bend left, ->]	(110)
		(011)	edge[->]		(111)
		(100)	edge[->]		(000)
		(100)	edge[->]		(101)
		(110)	edge[bend left, ->]	(010) ;
\end{tikzpicture}
\end{center}
\caption{Transition graph of the Boolean network defined in Ex.\ref{ex:baseExample}.
We use shorter notations, $010$ meaning that the node $1$ has $0$ as value, the node $2$ has $1$ as value, and the node $3$ has $0$ as value.
The attractors are boxed in magenta.
\label{fig:ex-tg}}
\end{figure}
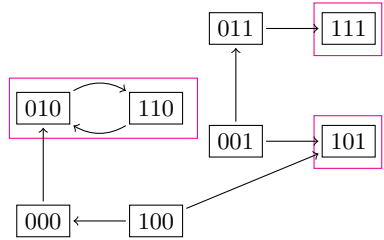


\paragraph{Attractors, Fixed point :}
BN's \emph{Attractors} are the terminal strongly connected components of the transition graph, and can be seen as the long-term dynamics of the system. Note that an attractor is always a set of states, but it can contain either multiple distinct nodes, that is the system oscillate between multiple states (\emph{cyclic attractor}) or a unique point, i.e the system stays in the same state (\emph{fixed point}).
\begin{definition}[Attractor]
\begin{align}
S \subseteq \{0,1\}^n \mbox{ is an attractor} \Leftrightarrow & \, S \neq \emptyset \\
&\mbox{ and } \forall x \in S,\, \forall y \in \{0,1\}^n \setminus S,\, x \not \to y \\
&\mbox{ and } \forall x \in S,\, S \setminus x \mbox{ does not verify (2) }
\end{align}
If $ \card S=1$ then $S$ is a fixed point. Otherwise $S$ is a cyclic attractor.
\end{definition}

Given a BN $f$, $\FP \subseteq \{0,1\}^n$ denotes the set of its fixed points
($\forall x \in \FP, f(x)=x$).

\begin{example}
The BN of Ex.\ref{ex:baseExample} has 3 attractors that correspond to the 3
terminal strongly connected components of Fig.\ref{fig:ex-tg}:
$\{010,110\}$ (cyclic attractor), $\{101\}$ and $\{111\}$ (fixed points).
\end{example}

\subsection{On the link between attractors and the interaction graph}

Theorem~\ref{thm:thomas} is a conjecture by Ren\'e Thomas \cite{Thomas73} that
has been since demonstrated for Boolean and discrete networks
\cite{Aracena2008,REMY2008335}:
if a Boolean network has multiple attractors then its interaction graph
necessarily contains a positive cycle.
In the case of multiple fixed points, any pair of fixed point differ at least on
a set of nodes forming a positive cycle.

\begin{theorem}[Thomas' first rule] 
If $G=(V,E)$ has no positive cycles, then $f$ has at most one attractor. Moreover,
if $f$ has two distinct fixed points $x$ and $y$, then $G$ has a positive
cycle between vertices $C\subseteq V$ such that $x_v \neq y_v$ for every vertex $v$ in $C$.
\label{thm:thomas}
\end{theorem}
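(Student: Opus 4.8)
The plan is to establish Thomas' first rule by way of its contrapositive and a sign-consistency argument on paths in the interaction graph. The statement has two parts: (1) no positive cycle implies at most one attractor, and (2) if two distinct fixed points $x,y$ exist, they differ on a set of nodes carrying a positive cycle. I would focus the core of the argument on part (2), since the bound on the number of attractors in part (1) follows from the same machinery once specialized to fixed points, and the full ``at most one attractor'' claim is the deep half that I would ultimately defer to the cited results \cite{Aracena2008,REMY2008335}; my self-contained contribution would be the fixed-point case.

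First I would fix two distinct fixed points $x,y\in\FP$ and set $D=\{v\in[n]\mid x_v\neq y_v\}$, which is nonempty. The key object is the induced subgraph $G[D]$, and the goal is to exhibit a positive cycle inside it. To get there, I would assign to each vertex $v\in D$ a sign $\sigma(v)\in\{+,-\}$ recording the direction of disagreement, e.g.\ $\sigma(v)=+$ if $x_v<y_v$ and $\sigma(v)=-$ otherwise. The central lemma I would prove is that every vertex $v\in D$ has at least one parent $u\in D$ such that the edge $(u,v,s)\in E$ is sign-compatible with the disagreement pattern, meaning $s$ times $\sigma(u)$ equals $\sigma(v)$ in the obvious $\{+,-\}$ product. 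The justification is that since $x$ and $y$ are both fixed points, $f_v(x)=x_v\neq y_v=f_v(y)$, so $f_v$ actually changes value between $x$ and $y$; flipping the coordinates on which $x$ and $y$ differ one at a time and using the discrete-derivative definition of $f_{vu}$ forces the existence of a coordinate $u\in D$ whose local flip changes $f_v$, and the sign of $f_{vu}$ at the intervening state gives the sign-compatible edge.

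With this ``nonzero in-degree within $D$'' lemma in hand, I would run a pigeonhole walk: starting from any vertex of the finite set $D$ and repeatedly stepping backward along sign-compatible incoming edges, I must eventually revisit a vertex, tracing out a cycle $C\subseteq D$ all of whose edges are sign-compatible. The final step is the sign bookkeeping: because every edge on $C$ satisfies $s\cdot\sigma(u)=\sigma(v)$, multiplying these relations around the cycle shows that the product of the edge signs equals $+$, i.e.\ the number of negative edges is even, so $C$ is a positive cycle, and by construction $x_v\neq y_v$ for every $v\in C$. This yields part (2) directly; part (1) in the fixed-point regime is its contrapositive.

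I expect the main obstacle to be the central lemma — specifically, proving that the disagreement can always be propagated backward while controlling the sign. The subtlety is that the discrete derivative $f_{vu}$ is evaluated at a single hybrid state on a flip-path between $x$ and $y$, whereas the edge sign in $G(f)$ only requires $f_{vu}$ to be positive (resp.\ negative) for \emph{some} state; I would need to argue that the particular intervening state witnessing the change of $f_v$ is itself a valid witness for an edge of the matching sign, so that the walk stays inside the genuine edge set $E$. Handling the bookkeeping cleanly — choosing a monotone flip-path from $x$ to $y$ so that each single-coordinate flip that alters $f_v$ pins down both a parent in $D$ and a consistent sign — is where the care is required; everything after that is the routine pigeonhole-and-parity argument sketched above.
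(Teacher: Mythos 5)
Your proposal is essentially correct, but note that the paper does not prove Theorem~\ref{thm:thomas} at all: it is stated as a classical result (Thomas' conjecture, proved in the works cited alongside the theorem) and used as a black box. What you have written is a reconstruction of the standard proof of the fixed-point half of that result, in the style of Remy--Ruet--Thieffry and Richard: assign each disagreeing coordinate $v$ the sign $\sigma(v)$ of $y_v-x_v$, show every $v$ with $x_v\neq y_v$ has a sign-compatible in-neighbour in the disagreement set $D$, walk backwards until a vertex repeats, and multiply the edge signs around the resulting cycle so that the $\sigma$'s telescope to $+$. The one point you flag but do not fully close is the only real subtlety: along the flip-path from $x$ to $y$ inside $D$, the value $f_v$ may change several times, and a step at which it changes is sign-compatible only when the change is in the direction from $x_v$ to $y_v$; such a step exists because $f_v(x)=x_v$ and $f_v(y)=y_v$ differ, so $f_v$ changes an odd number of times along the path and at least one change goes the right way. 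With that sentence added, your central lemma is sound, and the witness state for the nonzero discrete derivative $f_{vu}$ automatically certifies an edge of the matching sign in $G(f)$, so the walk does stay in $E$. You are also right to defer the first half of the statement: the contrapositive of your argument gives only ``no positive cycle implies at most one fixed point,'' not ``at most one attractor,'' and the stronger claim genuinely requires the machinery of the cited papers. Since the paper leans entirely on the citation, your self-contained argument for the second half is a strict addition rather than an alternative to anything in the text.
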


We can also remark that for a vertex to stay at a value $y_v$ where $y$ is a fixed point, it only needs its ancestors to have the same values as in $y$.

\begin{remark}
$\forall y \in \FP,\,\forall u \in [n], \forall z \in \{0,1\}^n \mbox{, z verifying } \forall j \in P_u,\, z_j = y_j, \mbox{ we have } f_u(y) = y_u =f_u(z)$.
\end{remark}

\begin{proof}
Let $u$ be a vertex in $[n]$. $f(u)$ only depends of the incoming arcs in $u$, so it only depends of $p_u$, which in turn depends on its parents. By induction, $f_u(y)$ only depends of $P_u$, and so, if $f_u(y) = y_u$ in $G$, then $f_u(y) = y_u$ in $\gp{u}$. \qed
\end{proof}

\section{Formalisation of the BN Reprogramming with Permanent Perturbations}
\label{sec:bn-reprogramming}

Given two fixed points $x$ and $y$ of Boolean network $f$, we want to identify
sets of nodes, referred to as Reprogramming Determinants (RDs), that when changed in $x$ enable to switch to $y$. As our theorems rely on the differences between the fixed points, we chose to focus on fixed points solely. Further work will extend, if possible, these theorems and algorithms to all kind of attractors.
In the scope of this paper, by "change" we mean permanently set the vertex to a new fixed value. If we "change" $u$ to $1$ (resp. $0$), then $f_u(x) =1$ (resp. $0$) for all $x$.
When switching to $y$ (by changing $I$) is possible, we have two cases : it
either means that $y$ is reachable from $x_{[x_I = y_I]}$ (existential
reachability, Def.~\ref{def:er}), or that $y$ is the only reachable fixed point
from $x_{[x_I = y_I]}$ (inevitable reachability, Def.~\ref{def:ir}).
These are two different approaches that we will both consider. To remove the temporal aspect, we make all the changes at the same time (hence $x_{[x_I = y_I]}$, otherwise an order should be visible), and only watch if $y$ is reachable. This also means that there is no indication of how long it takes for $y$ to be reached.

\begin{definition}[Existential Reachability]
\label{def:er}
With the boolean network F, a function $ER_F$ can be defined as $ER_F :
\powerset{\powerset{[n]}}$, with $ER_F(x,y) \mapsto v$ where $v$ is the set of all minimal vertex sets $I$ such that $x_{[x_I=y_I]} \rightarrow^* y$.
\end{definition}

\begin{definition}[Inevitable Reachability]
\label{def:ir}
Similarly, a function 
$IR_F : \powerset{\powerset{[n]}}$ can be defined as $IR_F(x,y) \mapsto w$ where $w$ is the set of all minimal vertices sets $I$ such as $\forall z \in \{0,1\}^n,\, x_{[x_I=y_I]} \rightarrow^* z \Rightarrow z \rightarrow^* y$.
\end{definition}

These two functions will give different results, and have different meanings, as shown in the examble below.

\begin{example}
\label{ex:reprog}
Let us consider the BN $f$ of Fig.\ref{fig:sbn} and its transition graph reproduced
in Fig.\ref{fig:sbn-tg}.
$f$ has 4 fixed points: $0000, 0001, 1100$ and $1101$.
Let $x=0000$ and $y=1100$. 
Fixing the node $\{1\}$ to $1$ in $x$ makes $y$ reachable :
$1100$ (=$y$) is reachable from $x_{[x_1=1]}=1000$ with the Boolean network $f'$ defined by
$f'_1(x) = 1$ and $f'_2=f_2$, $f'_3=f_3$, $f'_4=f_4$.
The transition graph of $f'$, considering the first node being active,
corresponds to the left part of the transition graph in Fig.\ref{fig:sbn-tg}.
One can then remark that $y$ is not the only fixed point reachable: from $1000$, $1101$ is also reachable.
If we also fix the node $\{4\}$ to $0$, $y$ is the only reachable fixed point from $x_{[x_1=1, x_4=0]}$
in the Boolean network $f''$ such that $f''_1(x) = 1$, $f''_2=f_2$, $f''_3=f_3$,
and $f''_4(x)=0$.

Therefore, with the previous definitions,
$\{1\} \in ER_F(0000,1100)$ but
$\{1\} \notin IR_F(0000,1100)$;
and
$\{1,4\}\in IR_F(0000,1100)$ but
$\{1,4\}\notin  ER_F(0000,1100)$.
Moreover, we also have $\{1,2\}$ and $\{1,3\}\in IR_F(0000,1100)$.
\begin{figure}
\begin{center}
\begin{tikzpicture}
\node[draw, align=left] (text) at (-4,-1)	{$f_1(x) = x_1 $ \\ $ f_2(x) = x_1 $ \\ $ f_3(x) = x_1 \wedge \neg x_3 $ \\ $ f_4(x) = x_3 \vee x_4$} ;
\node[main node]		(a) 					{1} ;
\node[main node]		(b) [below right of=a]	{2} ;
\node[main node]		(c) [below left of=a]	{3} ;
\node[main node]		(d) [below of=c]	{4} ;
\path 	(a) edge[blue, loop above] (a)
		(a) edge	[blue, ->]			(b)
		(a) edge	[blue, ->]			(c)
		(b) edge[red, -|]			(c)
		(c) edge[blue, ->]			(d)
		(d) edge[blue, loop below] (d) ;
\end{tikzpicture}
\end{center}
\caption{A BN of dimension $4$} \label{fig:sbn}
\end{figure}
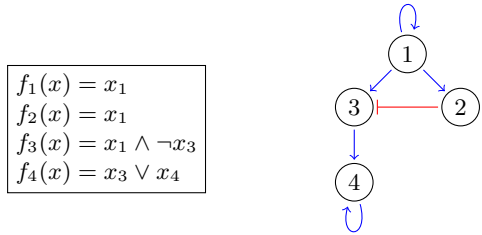

\begin{figure}
\begin{center}
\begin{tikzpicture}[node distance=1.3cm]

\node	(1010)	{1010} ;
\node	(1110)	[right of=1010]	{1110} ;
\node	(1000)	[below of=1010, label={[red]left:$\bar{x}^{\{1\}} =$}]	{1000} ;
\node[draw, align=center]	(1100)	[red, right of=1000, label={[red]below:$y$}]	{1100} ;

\node	(phantom0)	[above left of=1010]	{} ;

\node	(1011)	[above left of=phantom0]	{1011} ;
\node	(1111)	[right of=1011]	{1111} ;
\node	(1001)	[below of=1011]	{1001} ;
\node	(1101)	[red, right of=1001]	{1101} ;

\node	(phantom1)	[right of=1110] {};

\node	(0010)	[right of=phantom1] {0010} ;
\node	(0000)	[red, below of=0010, label={[red]below:$x$}]	{0000} ;
\node	(0110)	[right of=0010]	{0110} ;
\node	(0100)	[right of=0000]	{0100} ;

\node	(phantom2)	[above left of=0010]	{} ;

\node	(0011)	[above left of=phantom2]	{0011} ;
\node	(0111)	[right of=0011]	{0111} ;
\node	(0001)	[red, below of=0011]	{0001} ;
\node	(0101)	[right of=0001]	{0101} ;

\path	(0010) edge[->]	(0000)
		(0010) edge[bend right, ->]	(0011)
		(0011) edge[->]	(0001)
		(0100) edge[->]	(0000)
		(0101) edge[->]	(0001)
		(0110) edge[->]	(0100)
		(0110) edge[->]	(0010)
		(0110) edge[bend right, ->]	(0111)
		(0111) edge[->]	(0011)
		(0111) edge[->]	(0101)
		(1000) edge[->]	(1010)
		(1000) edge[->]	(1100)
		(1001) edge[->]	(1101)
		(1001) edge[->]	(1011)
		(1010) edge[bend right, ->]	(1011)
		(1010) edge[->]	(1110)
		(1011) edge[->]	(1111)
		(1110) edge[->]	(1100)
		(1110) edge[bend right, ->]	(1111)
		(1111) edge[->]	(1101) ;

\end{tikzpicture}
\end{center}
\caption{Transition graph of the BN in Fig.\ref{fig:sbn}
\label{fig:sbn-tg}}
\end{figure}
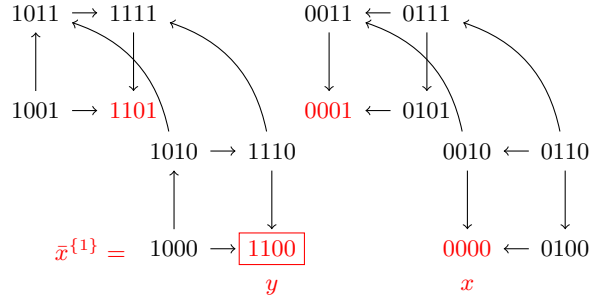
\end{example}

\section{Reprogramming Determinants and the SCCs of the Interaction Graph}
\label{sec:rd-scc}

In this section, we show the link between the RDs and the Strongly Connected
Components (SCCs) of the interaction graph of the Boolean network $f$.
Our results make the assumption that all the attractors of $f$ are fixed points
(no cyclic attractors).

\subsection{SCC Ordering}
To switch from $x$ to $y$, we want to change the value of each vertex $u$ that has different values for $x$ and $y$ ($x_u \neq y_u$) and to prevent each vertex $v$ that verifies $x_v = y_v$ from changing value. We know that changing the value of a vertex can have an impact on other vertices, but we also know that it will only impact its descendants. 

So, if a vertex has a different value in $x$ and $y$ but none of its ancestors do, then it is necessary to change this vertex. So, to know which vertices need to be changed first, the best way is to order them, with a topological order for example. 
Of course, if there are loops, an order is impossible to determine, we have to reduce all SCCs	 to single "super-vertices" to achieve it.
In the remaining of this paper, we will consider SCCs which contain at least
one positive cycle, because they are known to change between fixed points (Theor.\ref{thm:thomas}), we call $\SCC$ the SCC set that contains all such SCCs.
Reducing the graph to its SCCs makes possible to rank them from $1$ to $k$ with
any topological order, noted $\prec$:
for all $i,j\in[k], j > i\Rightarrow \SCC_j \not\prec \SCC_i$.

Let $C_0$ be the set $\{ \SCC_i \in \SCC \mid \nexists \SCC_j,\, \SCC_j \prec \SCC_i\}$, and recursively define slices $C_K = \{ \SCC_i \in (\SCC \setminus \bigcup_{l \in \{1,..,K-1\}}C_l) \mid \nexists \SCC_j,\, \SCC_j \prec \SCC_i\}$. Given the definition of the slices, for all topological orders, the slice set will be the same. The slices are numbered from $1$ to $c$.

From this order, we know which SCCs need to be impacted, still, SCCs ranked lower in the hierarchy
need not be impacted by the change in their ancestors (see ex.\ref{ex:blockswitch}) The relation $\prec$ only gives an order to make the changes, from which one can determine if further changes are needed.

\begin{example}
Showing that only using the topological order is not sufficient.
\begin{figure}
\begin{center}
\begin{tikzpicture}
\node[draw, align=left] (text) at (-4,0)	{$f_1(x) = \neg x_2$ \\ $f_2(x) = \neg x_1$ \\ $f_3(x) = x_1 \vee x_2$ \\ $f_4(x) = x_2 \wedge \neg x_3 $ \\ $f_5(x) = x_4 \vee x_5$} ;
\node[main node]		(c) 						{3} ;
\node[main node]		(a) [below left of=c]	{1} ;
\node[main node]		(b) [below right of=c]	{2} ;
\node[main node]		(d) [right of=c]			{4} ;
\node[main node]		(e)	[right of=d]			{5} ;
\path 	(a) edge[red, bend left, -|]	(b)
		(b) edge	[red, bend left, -|]	(a)
		(a) edge	[blue, ->]			(c)
		(b) edge[blue, ->]			(c)
		(c) edge[red, -|]			(d)
		(b) edge[blue, ->]			(d)
		(d) edge[blue, ->] 		(e)
		(e) edge[blue, loop right]	(e) ;
\end{tikzpicture}
\end{center}
\caption{BN preventing changes in the lower SCC}
\label{ex:blockswitch}
\end{figure}
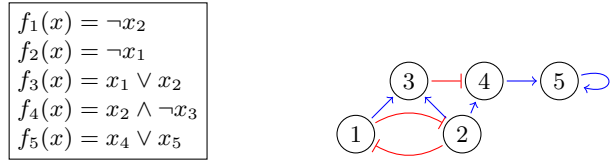

Any algorithm that only used the topological order without computing the reachable fixed points would not suffice, as the example from Fig.\ref{ex:blockswitch} shows :
the switch from the fixed point $01100$ to $10101$ would be computed by just modifying $\{1\}$, but in fact $\{4\}$ will always be fixed at $0$, because $\{4\}$ is always inhibited by $\{3\}$, so $\{5\}$ needs to be changed too.
\label{ex:blockswitcht}
\end{example}

\subsection{SCC Filtering}

Whether we want $y$ to be the only reachable attractor, or merely to be one of potential several such attractors, the ordering from the previous part is the same, but the filtering will differ. 
\begin{theorem}
\label{theor:ancestrality}
If a vertex $u$ such as $x_u \neq y_u$ and $u$ is not in a positive cycle, then modifying $u$'s ancestors is sufficient to modify $u$. \\ More generally, to switch from $x$ to $y$, modifying only those strongly connected components that contain at least a positive cycle is sufficient.
\end{theorem}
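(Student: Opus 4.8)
The plan is to exploit the acyclic condensation of $G(f)$ into its strongly connected components and to process the SCCs along a topological order $\prec$, reducing the global reprogramming to a chain of local ones. The crucial step, which I would isolate as a lemma, is an \emph{auto-correction} property: if $S$ is an SCC carrying no positive cycle and every proper ancestor of $S$ already holds its $y$-value, then the nodes of $S$ relax to $y|_S$ unaided, with no perturbation inside $S$. Granting this, the statement follows by induction along $\prec$: each SCC in $\SCC$ may be driven to its $y$-value by permanently fixing its nodes, whereas each SCC outside $\SCC$ is corrected for free, so that every perturbation ends up inside a member of $\SCC$.

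For the auto-correction lemma I would freeze every proper ancestor of $S$ at its $y$-value and pass to the induced Boolean network $f^S$ carried by the nodes of $S$; its interaction graph is the induced subgraph $G[S]$, which has no positive cycle precisely because $S\notin\SCC$ (every cycle lies inside a single SCC). By Thomas' first rule (Theorem~\ref{thm:thomas}), $f^S$ has at most one attractor. The Remark then shows that $y|_S$ is a fixed point of $f^S$: for each $u\in S$ every index of $P_u$ lies in $S$ or among the frozen ancestors, hence carries its $y$-value, so $f_u$ evaluates to $y_u$. A fixed point being an attractor, $\{y|_S\}$ is \emph{the} unique attractor of $f^S$; and since in any finite transition graph every state reaches a terminal SCC, every state of $f^S$ — in particular $x|_S$ — reaches $y|_S$. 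The first assertion of the theorem is the single-vertex instance of this lemma, namely $S=\{u\}$ with $u$ on no positive cycle.

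Finally I would assemble the global trajectory. Take $I=\bigcup_{S\in\SCC}S$ and start from $x_{[x_I=y_I]}$, so that in the perturbed network all nodes of positive-cycle SCCs are constants equal to their $y$-value. Running through the remaining, positive-cycle-free SCCs along $\prec$, I would apply the lemma component by component: when such an SCC $T$ is reached, all its proper ancestors — whether perturbed members of $\SCC$ or previously relaxed SCCs — already sit at $y$, so the $T$-updates realise $x|_T\to^* y|_T$ inside the full transition graph by firing only $T$-nodes. The point requiring care, and the main obstacle, is to certify that this schedule is legitimate and non-interfering: transitions confined to $T$ coincide with those of $f^T$ as long as $T$'s ancestors are held fixed, and once $T$ has reached $y|_T$ it cannot drift, because $T$ receives no arc from its descendants or from $\prec$-incomparable components and $y|_T$ is a fixed point of $f^T$. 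Chaining these local relaxations down the DAG yields $x_{[x_I=y_I]}\to^* y$ with $I$ contained in $\bigcup_{S\in\SCC}S$, which is the claimed sufficiency.
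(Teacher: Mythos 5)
Your proof is correct, and it follows the same overall strategy as the paper: induction along a topological order of the SCC condensation, the Remark of Sect.~\ref{sec:background} to see that $y$ restricted to a component is locally fixed once its ancestors sit at $y$, and Thomas' first rule to handle components without positive cycles. Where you genuinely differ is in the key lemma that closes the argument. The paper reasons vertex by vertex: it observes that $f_u(z)=y_u$ for any $z$ agreeing with $y$ on $P_u$, derives a contradiction with Theorem~\ref{thm:thomas} otherwise, and dismisses negative feedback on $u$ with the informal claim that the incoming arc from the cycle is irrelevant; the recursive propagation down the DAG and the actual construction of an asynchronous trajectory are left implicit. You instead apply Thomas' rule to the sub-network induced on a whole positive-cycle-free SCC $S$ with its ancestors frozen: the absence of positive cycles gives at most one attractor, the Remark makes $y|_S$ a fixed point and hence \emph{the} unique attractor, and finiteness of the transition graph forces every state, in particular $x|_S$, to reach it. This buys two things the paper's proof glosses over: a rigorous treatment of SCCs containing only negative cycles (where the pointwise argument via the Remark does not literally apply, since there $u\in P_u$ and the Remark's hypothesis would already require $z_u=y_u$), and an explicit, non-interfering schedule realising $x_{[x_I=y_I]}\to^* y$ in the asynchronous semantics. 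One small caveat: your reduction of the theorem's first assertion to the instance $S=\{u\}$ only covers the case where $u$ is its own SCC; if $u$ lies on a longer negative cycle you need the SCC-level version of your lemma (which you do prove), and if $u$ sits in an SCC that contains a positive cycle avoiding $u$, neither your argument nor the paper's addresses that reading of the statement.
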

\begin{proof}
Let $u$ be a vertex such that $x_u \neq y_u$ and $u$ does not lie in a positive cycle. If $u$ is in a negative cycle, the incoming arc from the cycle is irrelevant : given that $x$ and $y$ are fixed points and that $u$ has a distinct value in each, the negative cycle does not change $u$'s value. Given that $u$ is not in a positive cycle, $u$ is not in a SCC (or not relevant if it is in a negative cycle). That means that none of the ancestors are descendants of $u$. Let $z$ be the state where all of $P_u$ ($u$'s ancestors) have the same value that in $y$. By the remark from Sect.\ref{sec:background}, for all $v \in \gp{u}$, we have $f_v(z)=z_v=y_v$. So, either $f_u(z)=y_u$, and the theorem is proven, either $f_u(z) \neq y_u$, then, by Theor.\ref{thm:thomas}, $u$ is in a positive cycle, contradiction.
\qed
By recursion over the first part, modifying all the SCCs that contain positive cycles so their vertices have the same value as in $y$ modifies all their children, and then all the children of their children, and so on, until the whole graph has the same values as $y$. \qed

\end{proof}

Selecting the SCCs will differ with the two methods. It relies on the same base, searching the higher SCC that should have its values modified and that is not already selected. "Modified" means that all the values of the SCC are fixed to their values in $y$. The set of the selected SCCs is $\SC$.

\subsection{SCC Filtering for Existential Reachability}

We consider the RDs for the BN reprogramming with Existential Reachability.
We give an algorithm to identify different sets of SCCs for which the mutation
in the initial fixed point ensure the reachability of the target fixed point.
We will prove that the identified combination of SCCs is complete and minimal.

Basically, the algorithm reviews linearly the SCC slices according to $\prec$
and adds the minimal combinations of SCCs to $\SC$ that are different in $y$ and
the fixed points reachable from $x_{[x_{\SC} = y_{\SC}]}$:
\begin{enumerate}
\item $\SC := \emptyset$
\item For $i$ ranging from $1$ to $c$:
\begin{itemize}
\item $T := \emptyset$
\item $\forall s \in P(C_i)$ such that $s$ minimal\\ $\exists z \in \{0,1\}^n,\, z_{C_i \setminus s} = y_{C_i \setminus s},\,  x_{[x_I = y_I | I \in  \SC]} \to^*z$, $T := T \cup s.$
\item $\SC := \SC \bar{\times} T$.
\end{itemize}
\end{enumerate}
With $\bar{\times}$ being a product and union : for a set $I$ of subsets $I_1,..,I_k$ and a set $J_1,..,J_l$, this product $\bar{\times}$ is defined by : $I \bar{\times} J = \{I_1 \cup J_1, .., I_1 \cup J_l, I_2 \cup J_1, ...., I_k \cup J_l\}$

\paragraph*{Complexity :}
In the worst case, the above algorithm perform $c\times 2^l$ reachability checks
(PSPACE-complete \cite{ChengEP95}), where $l$ is the size of the largest slice.

\paragraph*{Existence of a solution and proof of correctness :}
Forcing all SCCs of such problem that differ on $x$ and $y$ to have the same value as in $y$ is one solution. In the worst case, that is what the algorithm will find. Since the algorithm tests reachability, and a solution exists, it will find one.

\begin{example}
We apply the algorithm on the BN of Fig.\ref{fig:bn-slices}
with $x=00000$ and $y=11011$.
\begin{figure}
\begin{center}
\begin{tikzpicture}
\node[draw, align=left] (text) at (-4,-1)	{$f_1(x) = x_1 $ \\ $ f_2(x) = x_1 $ \\ $ f_3(x) = x_1 \wedge \neg x_2 $ \\ $ f_4(x) = x_3 \vee x_4$\\ $f_5(x) = x_2 \vee x_5 $} ;
\node[main node]		(a) 					{1} ;
\node[main node]		(b) [below right of=a]	{2} ;
\node[main node]		(c) [below left of=a]	{3} ;
\node[main node]		(d) [below of=c]		{4} ;
\node[main node]		(e) [below of=b]		{5} ;
\node[draw, align=left, minimum width=2.5cm, minimum height=0.5cm] at (0,0) [magenta, label={[magenta]right:$C_1$}] {} ;
\node[draw, align=left, minimum width=2.5cm, minimum height=0.5cm] at (0,-1.7) [magenta, label={[magenta]right:$C_2$}] {} ;
\path 	(a) edge[blue, loop above] (a)
		(a) edge	[blue, ->]			(b)
		(a) edge	[blue, ->]			(c)
		(b) edge[red, -|]			(c)
		(c) edge[blue, ->]			(d)
		(b) edge[blue, ->]			(e)
		(e) edge[blue, loop below]	(e)
		(d) edge[blue, loop below] 	(d) ;
\end{tikzpicture}
\end{center}
\caption{BN of dimension $5$ (left) with its interaction graph (right).
Slices are enclosed in boxes. $C_1=\{\{1\}\}$, $C_2=\{\{4\},\{5\}\}$.
\label{fig:bn-slices}}
\end{figure}
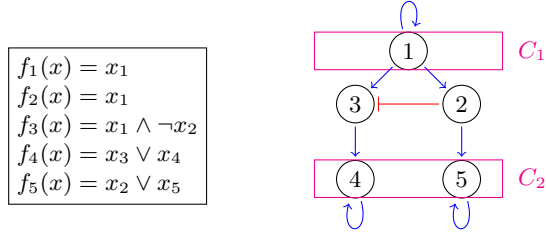

\begin{enumerate}
\item 
$\SC := \emptyset$
\item 
$C_1$: $s$ minimal $\Leftrightarrow \, s = \{1\}$
\item
$\SC := \SC \bar{\times} \{1\} = \{\{1\}\}$
\item
$C_2$: $s$ minimal $\Leftrightarrow \, s = \emptyset$ \footnote{with the path $10000 \rightarrow 10100 \rightarrow 10110 \rightarrow 11110 \rightarrow 11111$ (and the fixed point is the next step, $\rightarrow 11011$ but there is no need to go further than $11111$.)}
\item 
$\SC := \SC \bar{\times} \emptyset = \{\{1\}\}$.
\end{enumerate}
\end{example}

We now prove the completeness of the algorithm and the minimality of the
returned sets of SCCs (Theorem~\ref{thm:er-algo})
and that any RDs in $ER(x,y)$ is spans only and necessarily in one of the set of
SCCs identified by the algorithm (Theorem~\ref{thm:er-scc}).

\begin{theorem}
$\SC$ only contains minimal SCC sets, and $\SC$ is complete.
\label{thm:er-algo}
\end{theorem}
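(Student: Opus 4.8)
The plan is to establish both assertions simultaneously by induction on the slice index, carrying the invariant that after slice $C_i$ has been processed the family $\SC$ is exactly the set of inclusion-minimal unions of positive-cycle SCCs taken from $C_1\cup\dots\cup C_i$ whose permanent fixing at the $y$-values drives the whole block $C_1\cup\dots\cup C_i$ to its $y$-values. The backbone of the argument is the \emph{locality} forced by the topological order $\prec$: the SCCs inside a single slice are pairwise $\prec$-incomparable, so no directed path of $G$ joins two of them, and every ancestor of $C_i$ either lies in a strictly earlier slice or is not a positive-cycle SCC. Hence the sub-dynamics carried by $C_1\cup\dots\cup C_i$ together with their ancestors is autonomous --- the later slices cannot influence it --- and the SCCs of one slice evolve without interacting.

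First I would isolate the two structural facts that let a global reachability question be split slice by slice, together with the reduction they yield. The facts are: (i) by the remark of Sect.~\ref{sec:background} together with Theorem~\ref{theor:ancestrality}, as soon as the positive-cycle ancestors of $C_i$ carry their $y$-values every non-positive-cycle ancestor reaches its $y$-value, and, $y$ being a fixed point, they then remain there; and (ii) since the SCCs of $C_i$ do not interact and each $y$-value is stable once reached, the whole of $C_i\setminus s$ can be brought to $y$ iff every individual SCC of it can. Granting the reduction that the algorithm's global test for $s$ is equivalent to ``with the ancestors held at $y$, each SCC of $C_i\setminus s$ can still reach its $y$-value'', and writing $B_i$ for the SCCs of $C_i$ that cannot, the test holds precisely when $B_i\subseteq s$; its unique inclusion-minimal witness is $s=B_i$, so $T_i=\{B_i\}$ and the update $\SC:=\SC\mathbin{\bar{\times}}T_i$ merely appends $B_i$ to every current candidate. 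Since $B_i$ depends only on the $y$-values of the ancestors, which the induction hypothesis forces for every candidate alike, $T_i$ is common to all of them and the invariant is preserved.

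Granting the reduction, the two claims fall out at $i=c$. For completeness, let $R$ be any inclusion-minimal union of SCCs with $x_{[x_R=y_R]}\to^* y$; projecting a witnessing trajectory onto each slice and invoking autonomy, every SCC of $C_i\setminus R$ must be driven to $y$, so $B_i\subseteq R\cap C_i$ for all $i$, whence $R\supseteq\bigcup_i B_i$, and minimality forces $R=\bigcup_i B_i$ --- the single set the algorithm returns. For minimality of that set, deleting any SCC $\sigma\in B_i$ leaves $\sigma$ unfixed while its ancestors are still driven to $y$; by definition of $B_i$ it can then never reach $y_\sigma$, and since the target $y$ demands $\sigma$ at $y_\sigma$, reachability of $y$ is lost. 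Thus $\SC$ holds exactly the minimal SCC sets.

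The hard part will be discharging the reduction invoked twice above, namely that the global statement $x_{[x_R=y_R]}\to^* y$ really decomposes into the independent per-slice, per-SCC tests. The asynchronous dynamics is non-deterministic, and a non-fixed ancestor may pass through transient, non-$y$ values that a lower SCC could in principle exploit to reach a $y$-value it could not reach with that ancestor clamped. The crux is to show, using the autonomy of the ancestors and the fact that $y$ is a fixed point, that one may always schedule the transitions so as to drive each block to its $y$-values and freeze it before releasing the next one --- equivalently, that no SCC genuinely needs a transient value of a non-fixed ancestor. Once this scheduling lemma is in place, the minimality and completeness bookkeeping above is routine.
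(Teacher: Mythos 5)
Your proof follows the same route as the paper's: decompose along the slices of the $\prec$-order, argue that the SCCs inside one slice cannot influence one another, and conclude that per-slice minimality composes into global minimality while any globally minimal set must restrict to a per-slice minimal set on each $C_i$. The paper's own proof is exactly this argument compressed into two sentences (``inside every slice, the SCCs are totally independent'' for minimality; ``we can always choose the path that allows this change'' for completeness). The step you explicitly leave open --- the scheduling lemma asserting that the existential test $x_{[x_R=y_R]}\to^* y$ genuinely factors into independent per-slice, per-SCC tests even though unclamped ancestors pass through transient non-$y$ values --- is precisely the step the paper asserts without proof. So you have not introduced a new gap; you have located the one that was already there, and your identification of where a counterexample would have to live (correlation through shared ancestors that are driven to $y$ dynamically rather than clamped, not through direct paths between same-slice SCCs) is accurate.

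One consequence of your sharper formulation deserves attention, because it exposes a tension in the statement being proved. If the factorization you assume really holds, then each slice admits a \emph{unique} minimal $s$ (your $B_i$), so $T$ is always a singleton, the $\bar{\times}$ product never branches, and $\SC$ is a one-element family consisting of $\bigcup_i B_i$. The algorithm and the theorem are written as though several incomparable minimal SCC sets can coexist, which can only happen if the independence claim fails; one of the two must give. You should either prove the scheduling lemma (in which case the theorem can be strengthened to say $\SC$ is a singleton) or exhibit the mechanism by which two minimal witnesses arise in one slice (in which case both your minimality argument and the paper's collapse). As written, your proof is as complete as the paper's and considerably more honest about where the real work is.
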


\begin{proof}
\textbf{Minimality :}
Inside every slice, the SCCs are totally independant one another. Moreover, given the order exploiting, we can deduce that the sum of the minima on each slice is the minimum on the whole graph.
\qed	

\textbf{Completeness :}
Let $I$ be a minimal SCC set such as $x_{[x_J = y_J | J \in I]} \to^* y$, then, for every slice $C_i$, $I \cap C_i$ is minimal, since once all the SCCs in a slice can be changed to the way they are in $y$, we can always choose the path that allows this change. Hence $I \in \SC$.
\qed
\end{proof}

\begin{theorem}
$\forall c \in ER(x,y),\, \exists I \in \SC,\, \forall u \in c,\, \exists scc \in I,\, u \in scc$.
\label{thm:er-scc}
\end{theorem}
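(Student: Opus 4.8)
The plan is to prove the statement in two movements: a \emph{localization} step showing that every vertex of a minimal RD $c\in ER(x,y)$ must sit inside one of the strongly connected components collected in $\SCC$, and a \emph{matching} step showing that the family of components touched by $c$ is itself one of the combinations produced by the algorithm, i.e. an element of $\SC$. Before either step I would normalise $c$ by observing that every $u\in c$ satisfies $x_u\neq y_u$: if $x_u=y_u$, then permanently fixing $u$ at that value only deletes transitions from the asynchronous transition graph, so the transition graph obtained from $x_{[x_c=y_c]}$ is a subgraph of the one obtained from $x_{[x_{c\setminus\{u\}}=y_{c\setminus\{u\}}]}$ (both have $u$ starting at $x_u=y_u$). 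Since the former reaches $y$, so does the latter, contradicting minimality of $c$. Hence all perturbed nodes genuinely differ between $x$ and $y$.

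For the localization step I would invoke Theorem~\ref{theor:ancestrality} together with minimality. Suppose some $u\in c$ lies in no positive cycle. By the reasoning behind Theorem~\ref{theor:ancestrality}, once all ancestors $P_u$ hold their $y$-values the node $u$ is forced to $y_u$ by the Remark of Section~\ref{sec:background} (otherwise $u$ would lie on a positive cycle by Theorem~\ref{thm:thomas}). I would then argue that $u$ need not be fixed: starting from $x_{[x_{c\setminus\{u\}}=y_{c\setminus\{u\}}]}$, first drive $P_u$ to their $y$-values using the updates that the trajectory witnessing $x_{[x_c=y_c]}\to^* y$ already performs on $P_u$ — these read no value of $u$, since $u\notin P_u$ and, $u$ being in no cycle, $u$ is parent of no ancestor — then update $u$ to $y_u$, and finally replay the remainder of the original trajectory, along which $u$ now stays fixed at $y_u$. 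Thus $c\setminus\{u\}$ also reaches $y$, contradicting minimality. Consequently every $u\in c$ belongs to some $\SCC_j\in\SCC$, and I may set $I_c:=\{\SCC_j\in\SCC\mid \SCC_j\cap c\neq\emptyset\}$, which already yields $c\subseteq\bigcup_{scc\in I_c}scc$.

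For the matching step I would show $I_c\in\SC$ via the completeness half of Theorem~\ref{thm:er-algo}, which provides that $\SC$ contains every \emph{minimal} set of SCCs whose full flip reaches $y$. Two things must be checked: that fully flipping the components of $I_c$ still reaches $y$, and that $I_c$ is minimal for this property. I would argue minimality slice by slice, using that the global minimum is the sum of the per-slice minima (as in the proof of Theorem~\ref{thm:er-algo}): within a slice the components are mutually independent, so it suffices to rule out dropping a single $\SCC_j\in I_c$. If dropping $\SCC_j$ still gave a solution, then $\SCC_j$ would settle to its $y$-configuration once its ancestors are at their $y$-values; but then, exactly by the reordering used in the localization step, the vertices $c\cap\SCC_j$ would be redundant in $c$ — drive the ancestors of $\SCC_j$ to $y$, let $\SCC_j$ settle, and replay — contradicting minimality of $c$. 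Hence no component of $I_c$ is droppable, $I_c$ is the desired minimal SCC-solution, and $I:=I_c$ witnesses the theorem.

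The delicate point — and the step I expect to cost the most work — is the passage from the \emph{partial} flip $c$ inside a component to the \emph{full} flip demanded at the SCC level, compounded by components that switch \emph{automatically} (those differing between $x$ and $y$ yet touched neither by $c$ nor placed in $I_c$). Freezing a component at its $y$-configuration from the outset removes the transient values it exhibits under $c$, so a downstream automatically-switching component that relied on those transients could in principle fail to switch. Here I would lean on the freedom inherent in \emph{existential} reachability: since only some trajectory to $y$ is required, I would reorder updates so that each positive-cycle component is driven to its $y$-configuration only after all of its ancestors have stabilised at their $y$-values, and then apply Theorem~\ref{theor:ancestrality} recursively down the slice order to propagate the $y$-configuration to every descendant. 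Making this trajectory-surgery rigorous — in particular verifying that freezing $I_c$ never destroys a path that the algorithm's reachability tests already certified — is the crux on which the whole argument turns.
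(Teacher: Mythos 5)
Your skeleton matches the paper's: a localization step (every vertex of a minimal $c$ must lie in an SCC of $\SCC$, by Theorem~\ref{theor:ancestrality} plus minimality) followed by a step tying the touched SCCs to the algorithm's output $\SC$. The preliminary normalization ($x_u\neq y_u$ for all $u\in c$) is a sound addition for existential reachability, since the reachable part of the transition graph with $u$ frozen at its starting value is a subgraph of the one without that freeze. However, two genuine gaps remain. First, in the localization step you slide from ``$u$ lies in no positive cycle'' to ``$u$ lies in no cycle'': if $u$ sits on a negative cycle, then $u$ is a parent of some of its own ancestors, so your claim that the updates driving $P_u$ to their $y$-values ``read no value of $u$'' fails, and the replay argument needs the separate treatment of negative cycles that the proof of Theorem~\ref{theor:ancestrality} sketches.

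Second, and more seriously, your matching step proves something stronger than the paper attempts --- that the exact SCC-footprint $I_c$ of $c$ is itself an element of $\SC$ --- and the load-bearing step for that claim is precisely the one you leave open: that \emph{fully} freezing every component of $I_c$ at its $y$-configuration still reaches $y$, even though $c$ only partially flips some components, and downstream automatically-switching components may depend on transients that the full freeze destroys (exactly the phenomenon of Fig.~\ref{ex:blockswitch}). You flag this yourself as ``the crux on which the whole argument turns,'' so the proof is not finished. The paper avoids this passage altogether: it argues per vertex that if the SCC $o$ containing $u$ belongs to no $I\in\SC$, then by the construction of the algorithm there is a certified path along which $o$ switches without being flipped once its ancestors are set, so $u$ is redundant in $c$, contradicting minimality. (Completing your stronger route would in fact repair a looseness in the paper's own proof, which only exhibits a possibly different $I$ for each $u\in c$ rather than a single $I$ covering all of $c$; but as written your argument does not get there.)
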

\begin{proof}
Let $c$ be a vertex set in $ER(x,y)$ and $u$ one of the vertices. If $u \not \in \SCC$, then $c$ is not minimal, by Theorem \ref{theor:ancestrality}. If for all $I \in \SC$, $u$ is in $o \in (\SCC \setminus I)$ then there exists a path such that changing $o$'s ancestors makes $o$'s change possible, and the ancestors need to be changed as well, by construction of $I$. So $c \setminus u$ would have the same effect, and $c$ would not be minimal. If $u \not \in o$, then there exists $I \in \SC$ and $scc \in I$, such as $u \in scc$. \qed
\end{proof}

\subsection{SCC Filtering for Inevitable Reachability}

We now give an algorithm to identify a set of SCCs for which the mutation in the
initial fixed point is sufficient to ensure the \emph{Inevitable Reachability}
of the target fixed point.

The algorithm computes all reachable fixed points from $x$ with the SCCs in $\SC$ modified, and find the one, $z$, that has the lower SCC (in the ranking given by $\prec$) in which a vertex $u$ is such that $z_u \neq y_u$. As we are looking for all reachable fixed points, this will always return the same SCC (even if the order is only partial), thus allowing the algorithm to be deterministic. We add this SCC to $\SC$, and repeat until $y$ is the only reachable fixed point.

\begin{enumerate}
\item $\SC := \emptyset$
\item While $\exists z \in \FP,\,z \neq y,\, x_{[x_I = y_I | I \in  \SC]} \to^* z$
\begin{itemize}
\item $\SC := \SC \cup \{\SCC_i\}$, with\\ 
$i = min_{a \in \{1,..,k\}}(a \mid \exists z \in \FP,\, z_{\SCC_a} \neq y_{\SCC_a},\, x_{[x_I = y_I | I \in \SC]} \to^* z) $ 
\end{itemize}
\end{enumerate}

If two (or more) SCCs $A$ and $B$ are such that they are differently ordered in two distinct orders, then $A$ has no influence on $B$ and neither has $B$ on $A$. Then, the algorithm will select both SCCs if neither are impacted by the previous changes, so the order does not matter.

\paragraph*{Existence of a solution and proof of correctness :}
A solution is to fix all the SCCs of the graph to their value in $y$. Since there exists a solution and the algorithm tests if $y$ is the only reachable point, and follows the order given by $\prec$, it will end and find a solution.

\paragraph*{Complexity :}
Computing all fixed points reachable is PSPACE-complete \cite{CHJPS14-CMSB}.
It is used $k$ times (number of SCCs) in the worst case.

\begin{example}
We apply the algorithm on the BN of Fig.\ref{fig:bn-ir}.
with $x=00000$ and $y=11011$.
Starting from $\SC := \emptyset$, the
only reachable fixed point is $(0)00(0)(0)$ (the SCCs from $\SCC$ are parenthesized).
The smallest SCC $o$ such as $x_{[x_{\SC}=y_{\SC}], o} \neq y_o$ is $\SCC_1$, so
$\SC := \emptyset \cup \{\SCC_1\} = \{\SCC_1\}$.
The reachable fixed points from $x_{[x_I = y_I | I \in \SC]}=10000$ are now: $(1)10(1)(1)$ and $(1)10(0)(1)$.
The smallest SCC $o$ such that $x_{[x_{\SC}=y_{\SC}], o} \neq y_o$ is $\SCC_3$.
We set $\SC := \SC \cup \SCC_3 = \{\SCC_1, \SCC_3\}$
and we obtain that the only reachable fixed point from $x_{[x_I = y_I | I \in \SC]}=10010$ is $(1)10(1)(1)$ which is $y$.
So the algorithm stops.
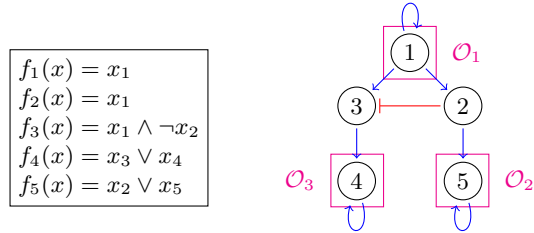
\begin{figure}
\begin{center}
\begin{tikzpicture}
\node[draw, align=left] (text) at (-4,-1)	{$f_1(x) = x_1 $ \\ $ f_2(x) = x_1 $ \\ $ f_3(x) = x_1 \wedge \neg x_2 $ \\ $ f_4(x) = x_3 \vee x_4$\\ $f_5(x) = x_2 \vee x_5 $} ;
\node[main node]		(a) 					{1} ;
\node[main node]		(b) [below right of=a]	{2} ;
\node[main node]		(c) [below left of=a]	{3} ;
\node[main node]		(d) [below of=c]		{4} ;
\node[main node]		(e) [below of=b]		{5} ;
\node[draw, align=left, minimum width=0.7cm, minimum height=0.7cm] at (0,0) [magenta, label={[magenta]right:$\SCC_1$}] {} ;
\node[draw, align=left, minimum width=0.7cm, minimum height=0.7cm] at (-0.7,-1.7) [magenta, label={[magenta]left:$\SCC_3$}] {} ;
\node[draw, align=left, minimum width=0.7cm, minimum height=0.7cm] at (0.7,-1.7) [magenta, label={[magenta]right:$\SCC_2$}] {} ;
\path 	(a) edge[blue, loop above] (a)
		(a) edge	[blue, ->]			(b)
		(a) edge	[blue, ->]			(c)
		(b) edge[red, -|]			(c)
		(c) edge[blue, ->]			(d)
		(b) edge[blue, ->]			(e)
		(e) edge[blue, loop below]	(e)
		(d) edge[blue, loop below] 	(d) ;
\end{tikzpicture}
\end{center}
\caption{BN of dimension 5 (left) with its interaction graph (right) on which
the SCCs containing positive cycles ($\SCC$) are boxed.
\label{fig:bn-ir}}
\end{figure}
\end{example}

\begin{theorem}
$\SC$ is minimal.
\end{theorem}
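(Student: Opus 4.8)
The plan is to prove inclusion-minimality directly: for every proper subset $S' \subsetneq \SC$, fixing $S'$ fails to make $y$ the unique reachable fixed point, i.e. leaves some reachable fixed point $z' \neq y$. I would reduce this to analysing the removal of a single component while all earlier-chosen ones are kept. Fix notation for the run of the algorithm: let $\SCC_{i_1}, \dots, \SCC_{i_m}$ be the SCCs added, in order, and set $\SC_{j} = \{\SCC_{i_1}, \dots, \SCC_{i_j}\}$, $\SC_0 = \emptyset$. By the selection rule, at step $j$ the index $i_j$ is the smallest for which some reachable fixed point $z^{(j)}$ from $x_{[x_I = y_I \mid I \in \SC_{j-1}]}$ satisfies $z^{(j)}_{\SCC_{i_j}} \neq y_{\SCC_{i_j}}$. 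The general case will follow by taking $\SCC_{i_j}$ to be the smallest-index element of $\SC$ missing from $S'$.

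The first key observation is that, by minimality of $i_j$, the witness $z^{(j)}$ agrees with $y$ on every SCC of index below $i_j$, hence on every ancestor SCC of $\SCC_{i_j}$ (ancestors have strictly smaller index in the chosen topological order). I would also rule out deviation of the \emph{trivial} SCCs among the ancestors: by the Remark of Section~\ref{sec:background} together with Theorem~\ref{thm:thomas}, any deviation in a fixed point must originate in a positive cycle, so a deviating trivial ancestor would force a deviating positive-cycle ancestor of smaller index, contradicting the minimality of $i_j$. Writing $A$ for the ancestor-closure of $\SCC_{i_j}$ together with $\SCC_{i_j}$ itself, it follows that $z^{(j)}$ restricted to $A$ is a fixed point of the sub-network $f|_A$ in which \emph{only} $\SCC_{i_j}$ deviates from $y$.

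The core is a modularity step. Since $A$ is upstream-closed (no edge enters $A$ from outside), the asynchronous dynamics restricted to $A$ depends only on the state of $A$, and every trajectory of $f|_A$ lifts to a trajectory of $f$ firing exactly the corresponding $A$-transitions. Next I would note that all SCCs added after step $j$ have index strictly larger than $i_j$, so none of them is an ancestor of $\SCC_{i_j}$ and none lies in $A$; combined with the fact that every $\SC$-element of index below $i_j$ belongs to $S'$, this gives the crucial equality $S' \cap A = \SC_{j-1} \cap A$ (and likewise for a single-element removal, $(\SC \setminus \{\SCC_{i_j}\}) \cap A = \SC_{j-1} \cap A$). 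Hence, when we fix $S'$, the dynamics and the initial state restricted to $A$ coincide exactly with those obtained when fixing only $\SC_{j-1}$, so the projected trajectory realising $z^{(j)}$ is still available: from $x_{[x_I = y_I \mid I \in S']}$ one can reach a state whose restriction to $A$ equals $z^{(j)}|_A$, with $\SCC_{i_j}$ deviating.

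Finally I would extend this partial configuration to a full reachable fixed point. Because $A$ is upstream-closed and $z^{(j)}|_A$ is a fixed point of $f|_A$, no subsequent transition alters $A$; and since every attractor of $f$ is a fixed point by hypothesis, the remaining downstream vertices can be driven to a stable configuration. This produces a reachable fixed point $z'$ with $z'|_A = z^{(j)}|_A$, so $z'_{\SCC_{i_j}} \neq y_{\SCC_{i_j}}$ and therefore $z' \neq y$, showing $S'$ fails. As $S'$ was an arbitrary proper subset, $\SC$ is minimal. The step I expect to be the main obstacle is this modularity/lifting argument: one must justify rigorously that holding the omitted downstream SCCs at their $y$-values neither blocks reaching $z^{(j)}|_A$ on $A$ nor prevents completion to a global fixed point — precisely where upstream-closedness of $A$ and the assumption that all attractors are fixed points are indispensable.
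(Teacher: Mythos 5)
Your proof follows the same basic strategy as the paper's --- take an SCC of $\SC$ that the candidate solution omits and exhibit a reachable fixed point that still deviates on it --- but you supply essentially all of the mathematical content that the paper leaves implicit. The paper's argument is three sentences: it asserts without justification that any lower-cardinality solution ``can be reduced to a subset of $\SC$,'' and then claims that ``by construction of $\SC$'' the deviating fixed point $z$ is reachable from $x$ modified by $S_1$. That last claim is precisely the gap you identify and close: by construction, $z$ is only known to be reachable when the \emph{prefix} $\SC_{j-1}$ is fixed, and transferring its reachability to the situation where a different set is fixed requires your modularity argument (the ancestor-closure $A$ is upstream-closed, $S'\cap A=\SC_{j-1}\cap A$, so the $A$-projected trajectory survives, and once $A$ is frozen at a configuration deviating from $y$ the IR condition of Def.~\ref{def:ir} fails --- note you do not even need to complete to a global fixed point for this, which sidesteps the worry about whether the mutated network still has only fixed-point attractors). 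You also correctly handle the trivial/negative-cycle ancestors via the Remark and Theorem~\ref{thm:thomas}, which the paper does not mention. The one difference in what is being proved: you establish inclusion-minimality, whereas the paper's wording gestures at cardinality minimality via the unjustified ``reduction'' step; your reading is the defensible one, since the greedy construction cannot plausibly guarantee minimum cardinality.

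Two small soft spots remain in your write-up. First, you assert that the indices $i_1<i_2<\cdots$ of the selected SCCs are increasing (equivalently, that no later-selected SCC is an ancestor of an earlier one); this is needed for $S'\cap A=\SC_{j-1}\cap A$ and is true, but it requires one more application of the same upstream-closedness argument (fixing $\SCC_{i_j}$ cannot create a newly reachable deviation on any SCC of smaller index, because such an SCC's ancestor-closure excludes $\SCC_{i_j}$). Second, the reduction from an arbitrary proper subset $S'$ to the single-omission analysis should be stated as: it suffices to find \emph{one} reachable deviating fixed point under $S'$, and the smallest-index omitted SCC provides it --- which is what you do, so this is a presentational point only. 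With those lines added, your proof is a strictly more rigorous version of the paper's.
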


\begin{proof}
If a set $S_1$ exists such that $S_1$ has a lower cardinal than $\SC$ and modifying $S_1$ makes $y$ the only reachable point, then we can reduce $S_1$ to a subset of $\SC$. Let $s$ be a SCC in $\SC \setminus S_1$, thus there exists a fixed point $z$ such that $z_s \neq y_s$ and by construction of $\SC$, $z$ is reachable from $x$ modified by $S_1$. \qed
\end{proof}

We remark that, contrary to the case of Existential Reachability, the RDs for
Inevitable Reachability of the target fixed point are not necessarily in SCCs
containing positive cycles.
Indeed, in Ex.\ref{ex:reprog}, we showed that $IR_F(x,y)$ can refer to nodes
that do not belong to $\SCC$ (such as the node $2$ for the BN of
Fig.\ref{fig:sbn}). But we can also remark that if a RD $v$ is not in a SCC containing a positive cycle, then $x_v = y_v$.

\begin{theorem}
$\forall v \in IR(x,y),\, x_v \neq y_v \Rightarrow \exists scc \in \SCC,\, v \in scc$
\end{theorem}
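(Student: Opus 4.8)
The plan is to prove the contrapositive and to recycle the machinery already developed for existential reachability, namely ancestrality (Theorem~\ref{theor:ancestrality}), Thomas' first rule (Theorem~\ref{thm:thomas}) and minimality, in the same spirit as the proof of Theorem~\ref{thm:er-scc}. Concretely, I would fix a minimal set $I \in IR(x,y)$ (in the sense of Def.~\ref{def:ir}), pick a vertex $v \in I$ with $x_v \neq y_v$, assume for contradiction that $v$ lies on no positive cycle (i.e.\ in no SCC of $\SCC$), and then try to exhibit a set $I'$ that still achieves inevitable reachability of $y$ with $v \notin I'$ and $\card{I'} \le \card{I}$, contradicting the minimality of $I$.

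First I would record that, because $v$ disagrees on the two fixed points yet lies on no positive cycle, Theorem~\ref{theor:ancestrality} forces this disagreement to be inherited from a disagreeing ancestor; tracing it upstream reaches an upstream-most disagreeing vertex, which must itself sit on a positive cycle, since otherwise Theorems~\ref{thm:thomas} and~\ref{theor:ancestrality} would pin it to a single value. Hence $P_v$ contains an SCC of $\SCC$ on which $x$ and $y$ differ. Next, applying Theorem~\ref{theor:ancestrality} to $v$ itself, once all of $P_v$ carry their $y$-values the value at $v$ is forced to $y_v$ with no direct intervention on $v$. This is exactly the mechanism that makes a non-cyclic flipped vertex redundant in the existential setting, and it would let me try to replace the direct control of $v$ by the control of the relevant positive-cycle SCCs upstream of $v$, producing $I'$ and the contradiction.

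The hard part — and the step I expect to be the real obstacle — is that inevitability is strictly stronger than existence, so the substitution above is not obviously value-preserving for Def.~\ref{def:ir}. A vertex $v \notin \SCC$ may belong to a minimal inevitable solution not in order to acquire its own target value, but in order to \emph{prevent} one of its descendants from driving some asynchronous trajectory into a spurious fixed point $z \neq y$; this blocking role is precisely what appears in Ex.~\ref{ex:reprog}, where permanently fixing a node keeps a descendant from switching on. The ancestrality argument only guarantees that a path to $y$ survives, not that every reachable attractor is still $y$, so it does not by itself license dropping such a blocking $v$. To close the proof one must show that whenever a flipped vertex $v \notin \SCC$ is needed for blocking, the same block can be reinstalled by fixing only positive-cycle SCCs among the ancestors of $v$ to their $y$-values — equivalently, that no minimal inevitable solution genuinely requires \emph{flipping} a vertex outside $\SCC$. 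I would attempt this by analysing, for each spurious fixed point $z$ reachable after applying $I \setminus \{v\}$, the lowest SCC of $\SCC$ on which $z$ and $y$ disagree (mirroring the selection rule of the inevitable-reachability algorithm) and arguing that locking that SCC subsumes the effect of locking $v$. Establishing that this removes \emph{all} spurious trajectories rather than merely reopening one path to $y$ is where the argument is most delicate, and is the point that would demand the most care — or a restriction of the statement to the flips that are common to every minimal solution.
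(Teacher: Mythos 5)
Your first two paragraphs are, in substance, the paper's entire published proof: the paper takes a vertex $v$ of a minimal $I$ with $v$ in no SCC of $\SCC$, invokes Theorem~\ref{thm:thomas} (in effect Theorem~\ref{theor:ancestrality}) to say that modifying the SCCs of $\SCC$ already suffices to modify $v$, and concludes from minimality that $x_v=y_v$. So on approach you coincide with the paper exactly. The difference is that the paper never confronts the obstacle you isolate in your third paragraph --- that removing $v$ from $I$ must preserve \emph{inevitability} in the sense of Def.~\ref{def:ir}, not merely keep open some path to $y$ --- and that obstacle is genuine: ``enough to modify $v$'' is an existential statement, while minimality of $I$ only fails if $I\setminus\{v\}$ still eliminates every spurious attractor.

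Moreover, the step you flag cannot be repaired as the statement is written, because the paper's own Example~\ref{ex:reprog} refutes it. There $x=0000$, $y=1100$, and $\{1,2\}\in IR_F(0000,1100)$ is minimal (neither $\{1\}$ nor $\{2\}$ alone suffices, as the paper itself notes for $\{1\}$ via the reachable fixed point $1101$). Node $2$ lies on no cycle of the interaction graph of Fig.~\ref{fig:sbn}, hence belongs to no SCC of $\SCC$, yet $x_2=0\neq 1=y_2$. Its role is exactly the blocking role you describe: permanently forcing node $2$ to $1$ keeps node $3$, and hence node $4$, at $0$, which is what excludes the spurious fixed point $1101$; the fact that node $2$ would eventually acquire the value $y_2$ anyway once node $1$ is fixed is irrelevant to inevitability (cf.\ the trajectory $1000\to 1010\to 1011\to 1111\to 1101$ in Fig.~\ref{fig:sbn-tg}, along which node $3$ fires before node $2$ does). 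So your closing instinct is the correct one: the claim must be restricted --- e.g.\ to vertices whose flip is needed to acquire their target value rather than to block a transient --- and no refinement of the substitution argument rescues the unrestricted statement. Your proposal is therefore not a complete proof, but the point you identify as the ``real obstacle'' is precisely where the paper's own three-line argument is unsound.
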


\begin{proof}
Let $v \in IR(x,y)$ such that for all $scc \in \SCC$, $v \not \in scc$. By Theor.\ref{thm:thomas}, if $v$ is such that $x_v \neq y_v$, then modifying the SCC in $\SCC$ is enough to modify $v$. But $v \in IR(x,y)$ and $IR(x,y)$ is minimal, thus $x_v = y_v$. 
\qed
\end{proof}

\section{Identifying Determinants within SCCs}
\label{sec:enumeration}

We know that modifying all the SCCs selected is enough to switch from $x$ to $y$, but to reduce the genes selected, we could try to modify only some of the vertices to achieve the same result. But, as dynamics are involved, there could be unwanted changes (or wanted and unpredicted changes, in the case where we want $y$ to be reachable) in the descendants.

An idea could be to select the feedback vertex set of the SCC : by fixing the
vertices from this set, we effectively destroy every circle, thus the only
reachable state of the SCC is the one having the same values as $y$. This, however, does not solve the problem : in Ex.\ref{ex:dynamicsPb}, $\{1\}$ is the feedback vertex set, and we still have the same issue. Moreover, it miss some of the possible solutions (modifying $\{2\}$ or $\{3\}$ could work to change the whole SCC in Ex.\ref{ex:dynamicsPb}) or even dismiss the best solution (in Ex.\ref{ex:dynamicsPb}, changing $\{3\}$ makes $y$ the only reachable fixed point and solves the issue).

\begin{example}
Illustration of the problem with dynamics.
\begin{figure}
\begin{center}
\begin{tikzpicture}
\node[draw, align=left] (text) at (-4,0)	{$f_1(x) = \neg x_3 \wedge \neg x_2$ \\ $f_2(x)= \neg x_1$ \\ $f_3(x) = \neg x_1$ \\ $f_4(x) = x_2 \wedge \neg x_1 \wedge \neg x_3$ \\ $f_5(x) = x_4 \vee x_5$} ;
\node[main node]		(a) 									{1} ;
\node[main node]		(b) [above right of=a]				{2} ;
\node[main node]		(c) [below right of=a]				{3} ;
\node[main node]		(d) [right of=a]						{4} ;
\node[main node]		(e)	[right of=d]						{5} ;
\path 	(a) edge[red, bend left, -|]	(b)
		(b) edge	[red, bend left, -|]	(a)
		(a) edge[red, bend left, -|]	(c)
		(c) edge	[red, bend left, -|]	(a)
		(a) edge	[red, -|]			(d)
		(b) edge[blue, ->]			(d)
		(c) edge[red, -|]			(d)
		(d) edge[blue, ->]			(e)
		(e) edge[blue, loop right]	(e) ;
\end{tikzpicture}
\end{center}
\caption{BN of dimension 5 (left) and its interaction graph (right)
\label{fig:bn-complex}}
\end{figure}
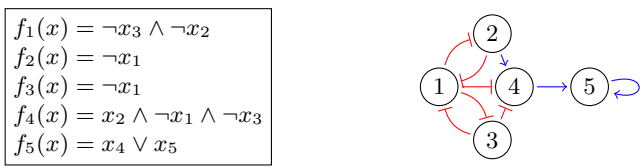

We decide that $x=10000$ and $y=01100$, and $01101=z$, those are all fixed points. Let's suppose we want $y$ to be the only fixed point reachable. The algorithm will see that if the whole first SCC is modified $\{1,2,3\}$, $y$ is the only reachable fixed point. 
It could pick $\{1\}$ to be modified, but instead of $00000 \rightarrow 00100 \rightarrow 01100$, we can have $$00000 \rightarrow 01000 \rightarrow 01010 \rightarrow 01011 \rightarrow 01111 \rightarrow 01101$$

This leads to $z$ being reachable by only modifying $\{1\}$, and so the algorithm would be wrong.
\label{ex:dynamicsPb}
\end{example}

This leaves to two kinds of approaches : either a way to modify the SCC so that it does not impact its descendants can be found, either we need to select the vertices to be modified in the SCC as a intermediary step in the process, and redesign $\SC$ as a list of vertices instead of a list of SCCs.

By exploiting the results of the preceding section, we show an algorithm to
compute a set of RDs which guarantees the Inevitable Reachability of the target
fixed point.
The algorithm recursively picks a vertex $u$ in the lowest SCC in the order given by $\prec$ in $\SCC$, and
modify its associated function to become the constant value $y_u$.
The interaction graph of the resulting Boolean network is a sub-graph of the
initial interaction graph, where all the input edges of the node $u$ have been
removed.
Hence, the SCC $\SCC_1$ is split in the new interaction graph.
If necessary, another vertex can be picked in the lowest SCC in the new
interaction graph:

\medskip

\noindent
RecursiveAlgorithm($f$, $rd$) :
\begin{itemize}
\item If $\exists z \in \FP,\, x_{[x_{rd} = y_{rd}]} \to^* z$ then :
\begin{itemize}
\item $res=\emptyset$
\item
$i = min_{a \in \{1,..,k\}}(a \mid \exists z \in \FP,\, z_{\SCC_a} \neq y_{\SCC_a},\, x_{[x_I = y_I | I \in \SC]} \to^* z) $ 
\item For all $u \in \SCC_i$ :
\begin{itemize}
\item $g := f$ with $g_u := y_u$
\item $res := res \cup$ RecursiveAlgorithm($g$, $rd \bar{\times} \{u\}$)
\end{itemize}
\item return $res$
\end{itemize}
\item else :
\begin{itemize}
\item return $rd$
\end{itemize}
\end{itemize}

Remark that the algorithm always find at least one solution: if the target fixed
point is not the only reachable fixed point, then there is at
least one positive cycle (and hence a SCC) which has a different state (and
hence will be selected by our algorithm).

\begin{example}
Applied to the BN of Fig.\ref{fig:bn-complex} with $x=10000$ and $y=01100$, the
above algorithm returns, for instance, the RD $\{2,5\}$:
indeed, $\{2\}$ belongs to $\SCC_1$.
When fixing $f_2=1$, the new interaction graph has two SCCs with positive
cycles: $\{1,3\}$ and $\{5\}$.
From the state $11000$, two fixed points are reachable:
$01100$, $01101$.
Hence, because the SCC $\{1,3\}$ has the same values than in $y$ in those two fixed
points,
the next vertex in picked in the SCC $\{5\}$.
Finally, from the state $11001$, $y$ is the only reachable fixed point.
\end{example}

\section{Discussion}
\label{sec:discussion}

This paper provides the first formal characterization of the Reprogramming Determinants (RDs) for
switching from one fixed point to another in the scope of the asynchronous dynamics of Boolean
networks.

In the case of reprogramming with existential reachability of the target fixed point, we
prove that all the possible minimal RDs modify nodes in particular combinations
of SCCs of the interaction graph of the Boolean network.
We give an algorithm to determine exactly those combinations of set of nodes.
Our characterizations rely on the verification of reachability properties.

In the case of reprogramming with inevitable reachability of the target fixed
point, we show that the RDs are not necessarily in SCCs.
However, we provide an algorithm which identifies RDs that guarantee the
inevitable reachability by picking nodes in appropriate SCCs.
The algorithm relies on the enumeration of reachable fixed points.

One of the main limitation of our algorithms is the numerous reachability checks it needs to perform.
Future work will consider methods and data structures for factorizing the exploration of the Boolean
network dynamics.

The present work considered only permanent mutations: when a node is mutated, it is assumed it
keeps its mutated value forever (its local Boolean function becomes a constant function).
Considering temporary mutations, i.e., where the local Boolean function of mutated nodes is restored
after some time, is a challenging research direction:
one should determine the ordering and the duration of mutations, and the set of candidate mutations
is \emph{a priori} no longer restricted to connected components, as it is the case for permanent
mutations.

\bibliographystyle{plain}
\bibliography{CellReprog_Methods_IG}


\end{document}